\documentclass[sigconf]{acmart}
\AtBeginDocument{%
  }

\setcopyright{acmlicensed}
\copyrightyear{2018}
\acmYear{2018}
\acmDOI{XXXXXXX.XXXXXXX}
\acmConference[Conference acronym 'XX]{Make sure to enter the correct
  conference title from your rights confirmation email}{June 03--05,
  2018}{Woodstock, NY}
\acmISBN{978-1-4503-XXXX-X/2018/06}

\usepackage{multirow} 
\usepackage{algorithm}
\usepackage{algpseudocode}

\begin{document}

\title{Dual-Rerank: Fusing Sequential Dependencies and Utility for Generative Reranking}

\author{Chao Zhang}
\authornote{Both authors contributed equally to this research.}
\affiliation{%
  \institution{Kuaishou Technology}
  \city{Hangzhou}
  \country{China}}
\email{zhangchao29@kuaishou.com}
\orcid{0009-0008-4098-3298}

\author{Shuai Lin}
\authornotemark[1]
\affiliation{%
  \institution{Kuaishou Technology}
  \city{Hangzhou}
  \country{China}}
\email{linshuai@kuaishou.com}


\author{Chenglei Dai}
\authornote{Corresponding Authors}
\affiliation{%
  \institution{Kuaishou Technology}
  \city{Hangzhou}
  \country{China}}
\email{daichenglei@kuaishou.com}

\author{Ye Qian}
\affiliation{%
  \institution{Kuaishou Technology}
  \city{Hangzhou}
  \country{China}}
\email{qianye@kuaishou.com}

\author{Mingyang Fan}
\affiliation{%
  \institution{Kuaishou Technology}
  \city{Beijing}
  \country{China}}
\email{fanmingyang03@kuaishou.com}





\author{Yi Zhang}
\affiliation{%
  \institution{Kuaishou Technology}
  \city{Beijing}
  \country{China}}
\email{zhangyi49@kuaishou.com}

\author{Yi Wang}
\affiliation{%
  \institution{Kuaishou Technology}
  \city{Beijing}
  \country{China}}
\email{wangyi05@kuaishou.com}

\author{Jingwei Zhuo}
\authornotemark[2]
\affiliation{%
  \institution{Unaffiliated}
  \city{Beijing}
  \country{China}}
\email{zhuojw10@gmail.com}

\renewcommand{\shortauthors}{Chao Zhang et al.}

\begin{abstract} 
Kuaishou serves over 400 million daily active users, processing hundreds of millions of search queries daily against a repository of tens of billions of short videos. 
As the final decision layer, the reranking stage determines user experience by optimizing whole-page utility.
While traditional score-and-sort methods fail to capture combinatorial dependencies, Generative Reranking offers a superior paradigm by directly modeling the permutation probability.
However, deploying Generative Reranking in such a high-stakes environment faces a fundamental dual dilemma: \textbf{1)} the \textit{structural trade-off} where Autoregressive (AR) models offer superior Sequential modeling but suffer from prohibitive latency, versus Non-Autoregressive (NAR) models that enable efficiency but lack dependency capturing; \textbf{2)} the \textit{optimization gap} where Supervised Learning faces challenges in directly optimizing whole-page utility, while Reinforcement Learning (RL) struggles with instability in high-throughput data streams.
To resolve this, we propose \textbf{Dual-Rerank}, a unified framework designed for industrial reranking that bridges the structural gap via \textit{Sequential Knowledge Distillation} and addresses the optimization gap using \textit{List-wise Decoupled Reranking Optimization (LDRO)} for stable online RL. Extensive A/B testing on production traffic demonstrates that Dual-Rerank achieves State-of-the-Art performance, significantly improving User satisfaction and Watch Time while drastically reducing inference latency compared to AR baselines.

\end{abstract}

\begin{CCSXML}
<ccs2012>
<concept>
<concept_id>10002951.10003317.10003338.10003343</concept_id>
<concept_desc>Information systems~Learning to rank</concept_desc>
<concept_significance>500</concept_significance>
</concept>
</ccs2012>
\end{CCSXML}

\ccsdesc[500]{Information systems~Learning to rank}



\keywords{Recommender Systems, Generative Reranking, Knowledge Distillation, Reinforcement Learning}


\maketitle


\section{Introduction}
In Kuaishou's short-video search, users initiate queries via a search box to access a dual-column feed~\cite{dualgr, kuailive}, where the initial viewport typically displays four videos. This compact layout positions the reranking stage as a critical determinant of user satisfaction. Traditional Score-and-Sort paradigms~\cite{wide-deep, youtube, prm, dcn, list-cvae} decouple scoring from ordering; while efficient, they fail to capture dynamic inter-item dependencies, often missing optimal permutations that maximize whole-page utility. To bridge this gap, Generative Reranking has emerged as a superior paradigm by directly modeling the joint probability $P(Y|X)$. Specifically, Autoregressive (AR) models~\cite{gref, seq2slate} explicitly capture sequential dependencies, aligning with the user's browsing experience, but suffer from prohibitive $O(L)$ latency. Conversely, Non-Autoregressive (NAR) models offer $O(1)$ parallel generation but are limited by the Conditional Independence Assumption. Prior attempts like NAR4Rec~\cite{nar4rec} use inference-time patches (e.g., Contrastive Decoding) but fail to resolve the intrinsic lack of sequential modeling parameters.

\begin{figure}[t]
  \centering
  \includegraphics[width=0.95\linewidth]{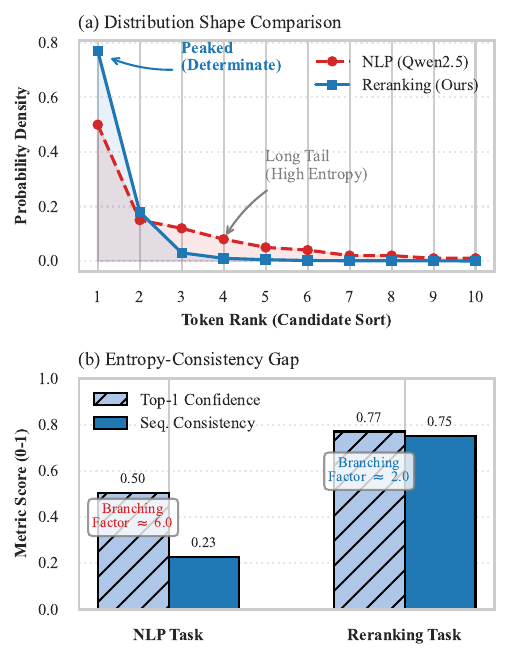}
  \vspace{-0.2cm}
  \caption{Comparison of Distribution Characteristics between NLP (Qwen2.5) and Industrial Reranking Tasks. (a) \textbf{Distribution Shape Analysis}: The curves plot the probability of the top-$k$ tokens at each decoding step, averaged over 100 samples from the Alpaca dataset (NLP) and one day of production logs (Reranking). Unlike the high-entropy, long-tail distribution observed in NLP (Red), reranking exhibits a sharp \textbf{Unimodal Concentration} (Blue). (b) Quantitative metrics reveal a significant ``Entropy-Consistency Gap,'' validating the quasi-deterministic nature of ranking. Detailed explanation of the metrics are showed in Appendix~\ref{app:metrics}.}
  \label{fig:unimodal_analysis}
\end{figure}

Instead of relying on extrinsic inference heuristics, we contend that the failure of NAR in open-ended text generation does not necessarily apply to reranking. In NLP, the core bottleneck to use NAR model is the ``Multi-Modality Problem''~\cite{nar_nlp}, where high entropy leads to flat, multi-peaked distributions. Does reranking face the same dilemma? To answer this, we conducted a rigorous comparison between a general LLM (Qwen2.5-1.5B) and an industrial AR reranker, as visualized in Figure~\ref{fig:unimodal_analysis}. 

The results reveal a critical Entropy-Consistency Gap:
1) \textit{NLP Divergence:} Due to linguistic ambiguity, the LLM exhibits a long-tail distribution (Fig.~\ref{fig:unimodal_analysis}a) with a low sequence consistency of 22.6\%, indicating that the model oscillates between multiple valid expressions.
2) \textit{Reranking Convergence:} In contrast, constrained by business utility, the reranking model demonstrates extreme stability—a sharp Unimodal Concentration with a Top-1 confidence of 0.77 and consistency reaching 75.0\%.
This empirical evidence suggests that the optimal solution space for reranking is highly constrained and peaked. Under such ``quasi-deterministic'' regimes, the complex conditional probability $P(y_t | y_{<t}, x)$ of an AR model mathematically degenerates towards a marginal distribution $P(y_t | x)$. This implies that if properly guided, a NAR model can effectively internalize the ordering patterns of an AR teacher without suffering from the multi-modal collapse common in NLP.

However, solving the structural efficiency dilemma is only half the battle. Industrial systems also face an \textit{Optimization Duality}: Supervised Learning (SL) ensures training stability but aligns only with past exposure bias, whereas Reinforcement Learning (RL) optimizes for future ecosystem utility (e.g., retention) but suffers from severe cold-start instability. To reconcile these orthogonal conflicts, we propose Dual-Rerank, a unified framework designed to harmonize both the \textbf{Structural Duality} (AR vs. NAR) and the \textbf{Optimization Duality} (Imitation vs. Evolution).

The framework unfolds in a progressive evolutionary paradigm. \textbf{In the training phase}, we design a two-stage mechanism to resolve the aforementioned dualities. First, to bridge the \textit{Structural Gap}, we leverage the \textit{Unimodal Concentration Hypothesis} and treat the AR model as a "Generative Anchor." Through Sequential Knowledge Distillation, the dependency logic of the teacher is transferred into a parallel NAR student, providing a mathematically stable initialization without incurring $O(L)$ latency. Second, to bridge the \textit{Optimization Gap}, we introduce List-wise Decoupled Reranking Optimization (LDRO). Moving beyond mere imitation, LDRO optimizes for \textit{Whole-Page Utility} by employing \textit{Vectorized Gumbel-Max} for efficient exploration and \textit{Streaming Double-Decoupling} to neutralize reward drifts in dynamic industrial streams.

\textbf{In the inference phase}, we capitalize on the computational surplus gained from the NAR architecture. By converting the "saved latency" into "search breadth," we implement a Sample-and-Rank (Best-of-N) strategy. This allows the model to approximate global optimality within strict industrial time constraints ($<20$ms), effectively completing the transition from efficiency to efficacy.

Our main contributions are summarized as follows:
\begin{enumerate}
    \item \textbf{Theoretical Insight:} We propose the \textit{Unimodal Concentration Hypothesis} and provide the $\epsilon$-Bound Ranking Stability Theorem, theoretically justifying the feasibility of transferring sequential knowledge from AR to NAR in constrained reranking tasks.
    \item \textbf{Framework \& Algorithm:} We introduce Dual-Rerank and the LDRO algorithm. LDRO innovatively resolves the "Reward Shift" and "Position Insensitivity" dilemmas inherent in applying on-policy RL to industrial streams via Double-Decoupling and Rank-Decay Modulation.
    \item \textbf{Industrial Impact:} Extensive offline and online A/B testing demonstrates that Dual-Rerank significantly improves core business metrics and reduce P99 latency, successfully validating the paradigm shift from discriminative to generative reranking in a large-scale production system.
\end{enumerate}


\section{Related Work}

\subsection{Evolution of Reranking Paradigms}
Reranking serves as the final decision layer in recommender systems, aiming to optimize the permutation of candidate items for maximum user satisfaction~\cite{prm, revisit, enhancing, variation, sort_gen}. 
Early approaches followed the \textit{Discriminative (Score-and-Sort)} paradigm. Models like PRM~\cite{prm} and DLCM~\cite{dlcm} employ attention mechanisms~\cite{transformer} or RNNs~\cite{rnn, seq2seq} to encode feature interactions. Despite incorporating local context, these methods typically assign independent scalar scores to items and sort them greedily. This approach often fails to capture the combinatorial nature of list utility, ignoring ``bundle effects'' where an item's value depends heavily on its neighbors~\cite{seq2slate, context_aware}.

To address this, \textit{Generative Reranking} formulates the task as a sequence generation problem, modeling the joint probability of the permutation to optimize holistic utility (e.g., diversity and coverage)~\cite{goalrank, gao2025llm4rerank, nlgr, gref, nar4rec}. 
Pioneering works like Seq2Slate~\cite{seq2slate} demonstrated that directly generating target indices outperforms greedy baselines. 
Subsequent research expanded this into two-stage frameworks (Generator-Evaluator), such as PIER~\cite{pier} and PRS~\cite{revisit}, which sample multiple candidate slates and select the best one based on whole-page metrics. 
Recent advances have further integrated list-wise optimization strategies, including discrete diffusion processes~\cite{diffusion} and reinforcement learning (RL) objectives~\cite{globally_r, ie2019slateq}, to better align generation with long-term user engagement. However, these complex two-stage or iterative methods often incur high computational costs, challenging their deployment in latency-sensitive industrial systems.

\subsection{Autoregressive vs. Non-Autoregressive Models}

Within the generative landscape~\cite{grn, mge, gref, nar4rec}, the trade-off between accuracy and latency is dictated by the decoding architecture.

\textbf{Autoregressive (AR) Models}, adopted by Seq2Slate~\cite{seq2slate} and recent LLM-based rerankers~\cite{llm4pr, llm4rerank, enhancing, reranking_review}, strictly follow the probability chain rule $P(Y|X)=\prod P(y_t|y_{<t}, X)$. 
By explicitly conditioning on previous selections, AR models effectively capture sequential dependencies and causal user behaviors. 
While achieving high ranking accuracy, their serial decoding nature results in $O(L)$ latency. Even efficient variants like GReF~\cite{gref} face bottlenecks under high concurrency.
Moreover, AR models trained via teacher forcing may suffer from exposure bias, leading to error propagation during inference~\cite{gref, enhancing}.

\textbf{Non-Autoregressive (NAR) Models} attempt to break this latency bottleneck by assuming conditional independence among item positions, allowing for $O(1)$ parallel generation~\cite{dd1, dd2, flow_matching}. 
Works like NAR4Rec~\cite{nar4rec} and SetRank~\cite{setrank} predict item positions or permutation matrices simultaneously. 
While significantly faster, the \textit{Conditional Independence Assumption} often leads to the ``multi-modality problem,'' resulting in incoherent lists or conflicting recommendations. 
Although techniques like contrastive decoding~\cite{nar4rec} attempt to mitigate this, they lack explicit sequential guidance. 
Different from these approaches, our Dual-Rerank framework bridges this structural gap by distilling sequential dependencies from an AR teacher into a parallel NAR student, unified with a decoupled RL optimization (LDRO) to ensure both structural consistency and business objective alignment.

\section{Methodology}

\begin{figure*}[!thbp]
  \centering
  \includegraphics[width=1\linewidth]{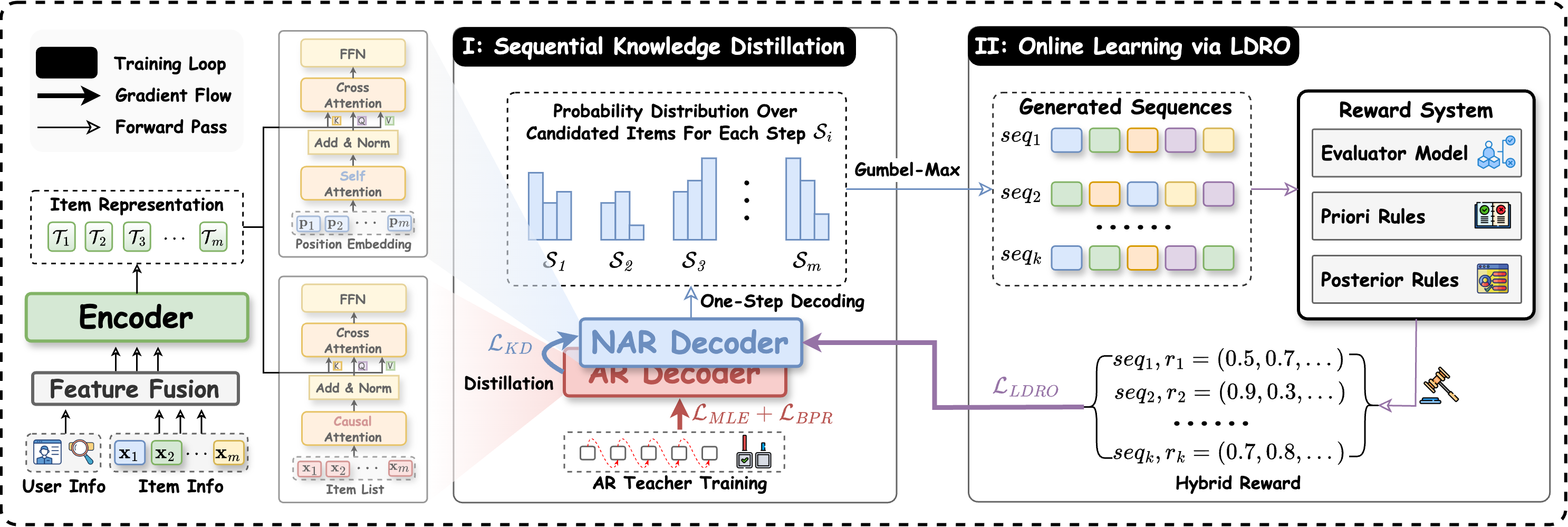}
    \caption{Overview of \textbf{Dual-Rerank}: joint online updates of an autoregressive Teacher and a non-autoregressive Student via (i) sequential distillation and (ii) online RL optimization (LDRO) under strict latency.}
  \label{fig:framework_overview}
  \vspace{-0.5em}
\end{figure*}

In this section, We propose \textbf{Dual-Rerank}, a unified online framework that reconciles the accuracy--latency tension in industrial reranking by \emph{jointly} updating an AR Teacher and a NAR Student on streaming interactions (Fig.~\ref{fig:framework_overview}, Alg.~\ref{alg:dual_rerank_streaming}).

\subsection{Framework Overview}

\textbf{Problem Formulation.} Given context $x$ (e.g., user profile, query, recent interactions) and candidate set $\mathcal{C}$, reranking aims to output an ordered slate $Y=[y_1,\dots,y_L]$ that maximizes the expected \textbf{Whole-Page Utility} $R(Y)$:
\begin{equation}
\label{pf}
    Y^* = \operatorname*{arg\,max}_{Y \in \mathcal{P}(\mathcal{C})} \mathbb{E} [R(Y) | x],
\end{equation}
where $\mathcal{P}(\mathcal{C})$ is the set of all permutations.

\textbf{Architecture.} Dual-Rerank uses two parallel branches with a shared encoder: (i) an AR \emph{Teacher} as a sequential anchor $P_{AR}(Y|x)$ trained from exposure logs and preference signals, and (ii) a NAR \emph{Student} $P_{NAR}(Y|x)$ initialized via distillation and improved via online RL for whole-page utility. We next detail (A) distillation (Sec.~\ref{sec:teacher_student}) and (B) LDRO (Sec.~\ref{sec:phase2}), followed by inference (Sec.~\ref{sec:inference}).

\begin{algorithm}[t]
\caption{Dual-Rerank Online Streaming Training Algorithm}
\label{alg:dual_rerank_streaming}
\begin{algorithmic}[1]
\Require 
    \Statex Online Interaction Stream $\mathcal{S}$; 
    \Statex Teacher Model $\pi_{\theta_{T}}$ (AR);
    \Statex Student Model $\pi_{\theta_{S}}$ (NAR);
    \Statex Loss Weights $\lambda_{KD}, \lambda_{RL}$.
    
\State Initialize parameters $\theta_{T}$ and $\theta_{S}$
\State \textbf{Input:} Continuous stream of user interaction batches $B_t = \{(x, y_{user}, r)\}_t$

\While{Stream $\mathcal{S}$ is active}
    \State Receive latest batch $B_t$ immediately after user actions
    
    \Statex \textbf{\textit{// 1. Teacher Learning (AR Path)}}
    \State Compute AR logits via sequential decoding
    \State $\mathcal{L}_{MLE} \leftarrow$ Max Likelihood Estimation on exposure logs
    \State $\mathcal{L}_{BPR} \leftarrow$ Pairwise Ranking Loss on feedback
    \State $\mathcal{L}_{Teacher} = \mathcal{L}_{MLE} + \mathcal{L}_{BPR}$

    \Statex \textbf{\textit{// 2. Sequential Distillation (Bridge)}}
    \State Compute NAR logits via parallel decoding
    \State $\mathcal{L}_{KD} \leftarrow D_{KL}(\text{StopGrad}(\pi_{\theta_{T}}) || \pi_{\theta_{S}})$ 
    \State \textit{Note: Teacher serves as a dynamic soft-label generator.}

    \Statex \textbf{\textit{// 3. Online Exploration (RL Path)}}
    \State Generate $K$ candidates via Gumbel-Max using $\pi_{\theta_{S}}$
    \State Compute Hybrid Rewards $R$ \& Apply Double-Decoupling
    \State $\mathcal{L}_{LDRO} \leftarrow$ Policy Gradient with Rank-Decay Modulation

    \Statex \textbf{\textit{// 4. Unified Gradient Update}}
    \State $\mathcal{L}_{Total} = \mathcal{L}_{Teacher} + \lambda_{KD} \cdot \mathcal{L}_{KD} + \lambda_{RL} \cdot \mathcal{L}_{LDRO}$
    \State Compute Gradients $\nabla \mathcal{L}_{Total}$
    \State Update $\theta_{T}$ and $\theta_{S}$ simultaneously via optimizer
\EndWhile

\end{algorithmic}
\end{algorithm}

\subsection{Sequential Knowledge Distillation}
\label{sec:teacher_student}

This component aims to endow the structurally efficient NAR Student with the sequential dependency modeling capabilities of the AR Teacher. In our joint streaming framework, this process functions as a dynamic "bridge," continuously transferring the teacher's evolving ranking logic to the student.

\subsubsection{The Sequential Anchor: Preference-Aware AR Teacher}
We adopt a Pointer Network-based AR Teacher and train it with a hybrid objective that captures both exposure bias and preference.


\textbf{1) Exposure Learning via MLE.} Given the exposed sequence $Y=\{y_1,\dots,y_L\}$:
\begin{equation}
\label{eq:mle}
    \mathcal{L}_{MLE} = - \sum_{t=1}^{L} \log \pi_{\theta_T}(y_t \mid y_{<t}, x).
\end{equation}

\textbf{2) Preference Injection via BPR.} We define an item utility score from feedback:
\begin{equation}
\label{eq:score}
    S(y) = 1.0 \cdot \mathbb{I}(\text{click}) + 2.0 \cdot \mathbb{I}(\text{long\_view}) + 0.1 \cdot \mathbb{I}(\text{exposure}),
\end{equation}
and construct $\mathcal{D}_{pair}=\{(y_i,y_j)\mid S(y_i)-S(y_j)>\delta_{score}\}$. The BPR loss is:
\begin{equation}
\label{eq:bpr}
    \mathcal{L}_{BPR} = - \sum_{(y_i, y_j) \in \mathcal{D}_{pair}} \ln \sigma\!\left(s_{\theta_T}(y_i|x) - s_{\theta_T}(y_j|x)\right),
\end{equation}
where $s_{\theta_T}$ denotes AR decoder logits.

\textbf{3) Total AR Objective.}
\begin{equation}
\label{eq:teacher_total}
    \mathcal{L}_{Teacher} = \mathcal{L}_{MLE} + \lambda_{bpr}\mathcal{L}_{BPR}.
\end{equation}

\subsubsection{Distillation and Feasibility}

With a robust AR Teacher, the challenge lies in transferring its sequential wisdom to the parallel Student. In open-ended generation tasks (e.g., NLP), NAR models often suffer from the "Multi-Modality Problem", leading to incoherent outputs.
However, we argue that the \textit{Reranking} task possesses a unique property that fundamentally distinguishes it from open-ended generation. We formalize this insight as follows:

\newtheorem{hypothesis}{Hypothesis}
\begin{hypothesis}[Contextual Determinism]
Given a sufficiently expressive encoder context $x$ (comprising user profile, interaction history, and item features), the optimal ranking $Y$ is quasi-deterministic. The conditional entropy $H(Y|x)$ approaches zero.
\end{hypothesis}

Under Hypothesis 1, knowing the full context $x$ provides sufficient information to determine the rank of item $y_t$. Mathematically, this implies that the information gain provided by the history $y_{<t}$ is negligible given $x$ (i.e., $I(y_t; y_{<t} | x) \approx 0$).
Consequently, the complex conditional distribution $P(y_t | y_{<t}, x)$ degenerates towards the marginal $P(y_t | x)$.
This justifies the use of Sequence-Level Knowledge Distillation to force the NAR Student $\pi_{\theta_S}$ to fit the modes of the AR Teacher:
\begin{equation}
\label{eq:distill_loss}
    \mathcal{L}_{KD} = \sum_{t=1}^{L} D_{KL}(\text{stop\_grad}(\pi_{\theta_T}(\cdot|y_{<t}, x)) \parallel \pi_{\theta_S}(\cdot|x))
\end{equation}
where `stop\_grad' ensures the Teacher is not destabilized by the Student's gradients.

We further provide a stability bound for ranking flips (proof in Appendix~\ref{sec:appendix_theory}):

\begin{theorem}[$\epsilon$-Bound Ranking Stability]
Given an AR Teacher with a robust ranking margin $\delta$ (confidence), and a distillation error $\epsilon$ (measured by Total Variation Distance), the probability of the NAR Student producing a \textbf{Ranking Flip} (i.e., erroneously predicting $P(y_j) > P(y_i)$ when the teacher indicates $y_i \succ y_j$) is strictly upper-bounded by:
\begin{equation}
    P(\text{Flip}) \le \frac{\sqrt{2\epsilon}}{\delta}
\end{equation}
where $\xi$ is a slack variable related to the teacher's uncertainty.
\end{theorem}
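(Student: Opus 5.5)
The plan is to reduce the statement to a deterministic perturbation argument: a flip can only occur when the Student's distribution moves at least a margin's worth away from the Teacher's. We then convert the KL-type distillation error into total variation via Pinsker's inequality, and bound the probability of a flip with a Markov-type inequality.

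\textbf{Formalizing the quantities.} Fix a context $x$ and a step $t$. Write $p=\pi_{\theta_T}(\cdot\mid y_{<t},x)$ for the Teacher and $q=\pi_{\theta_S}(\cdot\mid x)$ for the Student. The ranking margin is the assumption $p(y_i)-p(y_j)\ge\delta$ whenever the Teacher indicates $y_i\succ y_j$. The distillation error is the per-step divergence $D_{KL}(p\,\|\,q)$ appearing in Eq.~(\ref{eq:distill_loss}); we take $\epsilon$ to bound it, in expectation over $x$ and $t$. Pinsker's inequality gives
\[
\mathrm{TV}(p,q)\le\sqrt{\tfrac12 D_{KL}(p\|q)}.
\]
So a KL error $\epsilon$ yields $\mathrm{TV}\le\sqrt{\epsilon/2}$. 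To match the stated constant $\sqrt{2\epsilon}$, we use the looser bound $\|p-q\|_1=2\,\mathrm{TV}\le\sqrt{2\epsilon}$. If $\epsilon$ is instead read literally as a TV distance, $\sqrt{2\epsilon}\ge\epsilon$ for $\epsilon\le 2$, so that reading is also covered.

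\textbf{Deterministic step.} A flip means $q(y_j)>q(y_i)$. Then
\[
\delta\le p(y_i)-p(y_j)< \bigl(p(y_i)-q(y_i)\bigr)+\bigl(q(y_j)-p(y_j)\bigr)\le \|p-q\|_1 .
\]
Hence, pointwise in $(x,t)$, a flip requires $\|p-q\|_1>\delta$. Note that the crude bound by $2\,\mathrm{TV}$ already absorbs both coordinates.

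\textbf{Probabilistic step.} Now treat $(x,t)$, and the pair $(i,j)$, as random draws from the data stream. By Markov's inequality,
\[
P(\text{Flip})\le P\bigl(\|p-q\|_1>\delta\bigr)\le \frac{\mathbb{E}\|p-q\|_1}{\delta}.
\]
Jensen's inequality applied to the concave square root gives
\[
\mathbb{E}\sqrt{2D_{KL}}\le\sqrt{2\,\mathbb{E}D_{KL}}\le\sqrt{2\epsilon},
\]
which yields $P(\text{Flip})\le\sqrt{2\epsilon}/\delta$. If the Teacher's margin is only guaranteed with high probability, the event where the margin fails is absorbed into the slack term $\xi$ mentioned in the statement. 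Formally, this would give $P(\text{Flip})\le\sqrt{2\epsilon}/\delta+\xi$, where $\xi$ is the probability that the margin assumption is violated.

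\textbf{Main obstacle.} The delicate point is aligning the theorem's loose wording with a precise probability space. We must specify whether $P(\text{Flip})$ is over contexts and steps or over sampled slates. We must also reconcile the fact that the Student's marginal $q$ is compared against the Teacher's conditional $p$ at each step. Here Hypothesis~1 is needed only to make the margin assumption plausible, not for the inequality itself. I would fix the convention that the probability is over $(x,t)$ drawn from the stream, and that $\epsilon$ bounds the averaged KL in Eq.~(\ref{eq:distill_loss}). Under that convention, every step above is rigorous.
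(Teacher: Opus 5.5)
Your deterministic step is the paper's argument. The paper applies Pinsker to get $|P_T(y)-P_S(y)|\le\sqrt{\epsilon/2}$ for each single item. It then takes the worst case where $y_i$ is underestimated and $y_j$ overestimated, so a flip forces $P_T(y_i)-P_T(y_j)<2\sqrt{\epsilon/2}=\sqrt{2\epsilon}$. Your bound $\|p-q\|_1=2\,\mathrm{TV}\le\sqrt{2D_{KL}(p\|q)}$ is exactly that $2e$; it is the same bound, not a looser one.

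The routes diverge at the probability. The paper assumes a uniform bound $D_{KL}(P_T\|P_S)\le\epsilon$ and proves only that $\delta\ge\sqrt{2\epsilon}$ implies no flip. It then asserts the ratio bound as a ``converse'' without deriving it. Under that uniform reading the displayed inequality is trivial: either $P(\text{Flip})=0$, or $\sqrt{2\epsilon}/\delta>1$.

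Your Markov-plus-Jensen step lets $\epsilon$ be an average over contexts and positions. That is what minimizing $\mathcal{L}_{KD}$ in expectation actually controls. It gives a non-vacuous bound in which the $1/\delta$ factor has a real source, namely Markov's inequality. The paper's version instead buys a pointwise zero-flip guarantee, but only under the stronger uniform hypothesis.

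One small slip is in your aside on reading $\epsilon$ as a TV distance. The relevant quantity is then $\|p-q\|_1=2\epsilon$, not $\epsilon$, and $2\epsilon\le\sqrt{2\epsilon}$ holds only for $\epsilon\le 1/2$. The case $\epsilon>1/2$ is covered only because $\delta\le 1$ makes the right-hand side exceed $1$.
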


\subsubsection{Empirical Validation: Position-wise Top-1 Agreement Rate}
\label{sec:empirical_validation}

We validate the approximation implied by Hypothesis 1 via the \textbf{Position-wise Top-1 Agreement Rate (PTAR)} between the Teacher (history-conditioned) and Student (context-only).

\textbf{Metric.} For context $x$ and teacher condition $y_{<t}$ (from ground truth), define:
\begin{equation}
\label{eq:ptar_argmax}
    \hat{y}_t^{T} = \arg\max_{v \in \mathcal{V}} \pi_{\theta_T}(v \mid y_{<t}, x), \quad
    \hat{y}_t^{S} = \arg\max_{v \in \mathcal{V}} \pi_{\theta_S}(v \mid x).
\end{equation}
PTAR averages the position-wise match rate over $\mathcal{D}$:
\begin{equation}
\label{eq:ptar}
    \text{PTAR} = \frac{1}{|\mathcal{D}| \cdot L} \sum_{(x, Y) \in \mathcal{D}} \sum_{t=1}^{L} \mathbb{I}(\hat{y}_t^{T} = \hat{y}_t^{S}).
\end{equation}

\textbf{Rationale for Validation.}
The AR Teacher makes decisions based on the \textit{full information set} (Context + History), while the NAR Student relies \textit{solely on Context}.
1) If the sequential dependency $y_{<t}$ were strictly indispensable for correctness (i.e., $y_t$ changes depending on $y_{<t}$), the Context-only Student would theoretically lack the necessary information to match the Teacher's predictions, leading to a low PTAR.
2) Conversely, a high PTAR empirically proves that the optimal decision $\hat{y}_t$ is effectively determined by $x$ alone. In this case, the complex sequential reasoning of the Teacher collapses into a \textit{quasi-deterministic} outcome that can be accurately "cloned" by the Student, validating our approximation assumption.

\textbf{Result Analysis.} Figure~\ref{fig:agreement_rate} compares the training dynamics of a Pure NAR versus our Distilled NAR.
Distillation yields higher and smoother PTAR than a non-distilled NAR baseline, supporting that the Student can reliably approximate the Teacher’s position-wise decisions from context alone.

\begin{figure}[t]
    \centering
    \includegraphics[width=0.9\linewidth]{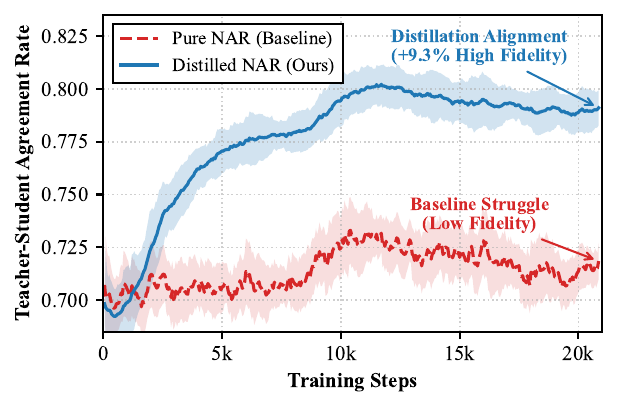}
    \vspace{-0.5em}
    \caption{Teacher--Student PTAR during training. Distillation improves agreement and stability over a pure NAR baseline.}
    \label{fig:agreement_rate}
    \vspace{-1.5em}
\end{figure}

Distillation provides a stable structural prior, but optimizing \textbf{Whole-Page Utility} requires online learning with delayed, multi-objective, and non-stationary feedback. We therefore optimize the Student with \textbf{List-wise Decoupled Reranking Optimization (LDRO)}.

Applying RL in a high-throughput industrial stream presents three specific challenges:
\textbf{(1) Sampling Efficiency Bottleneck:} On-Policy RL requires extensive online exploration. Traditional AR-based sampling incurs $O(L)$ latency, creating a severe bottleneck for real-time training throughput.
\textbf{(2) Streaming Distribution Drift:} Unlike stationary batch training, streaming data suffers from "Reward Shift" (e.g., CTR fluctuates between peak and off-peak hours). Standard RL baselines often fail to adapt to these non-stationary baselines, leading to gradient instability.
\textbf{(3) Position Sensitivity Mismatch:} Standard RL objectives (maximizing cumulative reward) typically imply a linear contribution of actions. This contradicts the \textbf{Cascading Attention Model} in search, where an error at Rank 1 causes exponentially higher utility loss than at Rank 10.

To systematically address these issues, LDRO integrates three tailored components.

\subsubsection{Challenge 1 Solution: High-Throughput Parallel Exploration}
To resolve the sampling bottleneck, we capitalize on the parallel nature of the \textbf{NAR architecture}. By combining NAR with the \textbf{Vectorized Gumbel-Max} trick, we transform the stochastic sampling process into fully vectorized tensor operations (Addition + Argmax).
For a position $t$ and item $i$, the sampled action is derived as:
\begin{equation}
\label{eq:gumbel}
    y_t = \arg\max_{i} \left( \frac{\log \pi_{\theta_S}(i|x) + \mathcal{G}_{i}}{\tau} \right)
\end{equation}
where $\mathcal{G} \sim \text{Gumbel}(0, 1)$. Crucially, this allows us to generate $K$ candidate slates $\{Y^{(1)}, \dots, Y^{(K)}\}$ in $O(1)$ time, providing the massive data throughput required for stable On-Policy RL updates.

\begin{figure}[!t]
    \centering
    \includegraphics[width=0.9\linewidth]{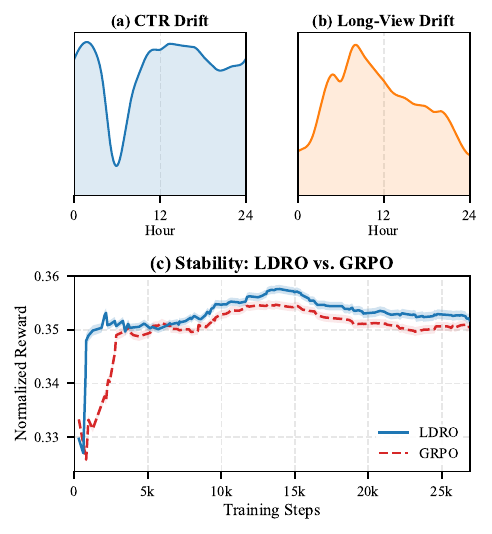} 
    \vspace{-0.5em}
    \caption{Stability under streaming drift. (a) \& (b): Real-world logs reveal significant drifts in different time. (c): Comparison of Total Reward during training.}
    \label{fig:drift_stability}
    \vspace{-1em}
\end{figure}

\subsubsection{Challenge 2 Solution: Online Double-Decoupling}
Given $K$ generated slates, the core challenge lies in constructing a stable reward signal from a non-stationary data stream.
we conduct a Hybrid Reward and apply a Double-Decoupling Normalization to further stabilize the learning process.

\paragraph{1. Hybrid Reward Construction.}
To capture the multidimensional "Whole-Page Utility," we construct a reward vector $\mathbf{r} \in \mathbb{R}^{M}$ from three sources:
\textbf{(1) Reward Network ($R_{net}$):} Since online feedback is delayed, we employ a pre-trained multi-task network to estimate immediate outcomes and long-term satisfaction.
\textbf{(2) Priori Constraints ($R_{prior}$):} To adhere to business layouts, we compute relevance NDCG explicitly for the visible windows (Top-4 and Top-8).
\textbf{(3) Posterior Feedback ($R_{post}$):} We utilize real-time user actions (Clicks/Long-Views) as sparse but ground-truth signals to correct the bias of the Reward Network.

\paragraph{2. Online Double-Decoupling.}
To address distribution shifts. LDRO applies a two-stage decoupling process: 

\begin{figure*}[!t]
    \centering
    \includegraphics[width=\linewidth]{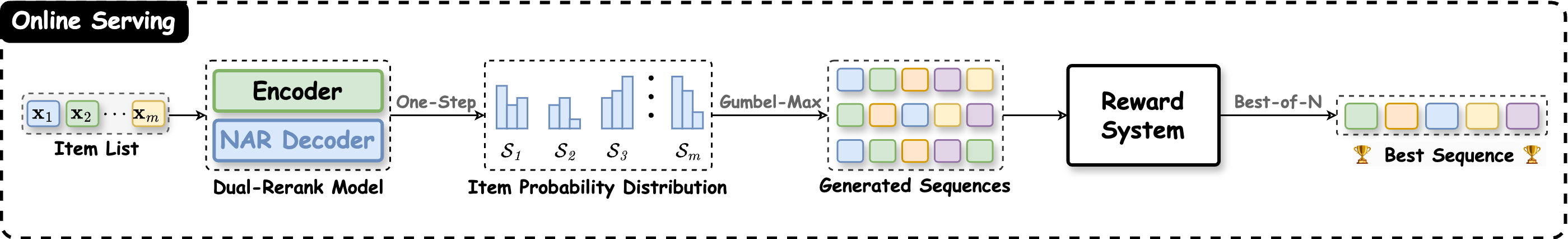} 
    \caption{Overview of the Online Serving Phase. The framework leverages One-Step NAR decoding and Vectorized Gumbel-Max sampling to generate parallel candidates, enabling a high-throughput Best-of-N selection strategy.}
    \label{fig:online_serving}
\end{figure*}

\textbf{Step 1: Group-wise Normalization.}
For a fixed context $x$, we sample $K$ candidate slates $\{Y^{(1)},\dots,Y^{(K)}\}$ from the current policy. 
We compute a \textbf{group advantage} by normalizing objective-$m$ rewards within this $K$-slate set:
\begin{equation}
    A_{m}^{(j)} = \frac{r_{m}^{(j)} - \mu_{group, m}}{\sigma_{group, m} + \epsilon}
\end{equation}

\textbf{Step 2: Batch-wise Standardization.}
While Step 1 handles query-level variance, it cannot handle \textbf{Global Distribution Drift}. As illustrated in Figure~\ref{fig:drift_stability} (a) and (b), real-world traffic exhibits significant "Tidal Drifts" (e.g., peaks vs. valleys in CTR and Long-View rates).
To prevent gradient explosions caused by these shifts, we apply a second normalization across the current \textit{mini-batch} $\mathcal{B}$:
\begin{equation}
    \hat{A}_{m}^{(j)} = \frac{A_{m}^{(j)} - \mu_{\mathcal{B}, m}}{\sigma_{\mathcal{B}, m} + \epsilon}
\end{equation}
This batch-wise standardization effectively neutralizes the non-stationary nature of the stream. 
The impact of this stabilization is evident in \textbf{Figure~\ref{fig:drift_stability} (c)}. Compared to the standard Group Relative Policy Optimization (GRPO) baseline, which suffers from reward volatility (Red Curve) due to the aforementioned drifts, LDRO (Blue Curve) achieves significantly faster convergence and maintains a consistently higher normalized reward throughout the streaming training phase.
Finally, the multi-objective advantages are fused: $A_{total}^{(j)} = \sum \alpha_m \hat{A}_{m}^{(j)}$.

\subsubsection{Challenge 3 Solution: Rank-Weighted Policy Optimization}
To align optimization with user's \textbf{Cascading Attention}, we introduce \textbf{View-Port Aware Rank Modulation}. Acknowledging that attention remains consistently high in the initial "First Screen" before decaying, we apply a generic \textbf{Piecewise Decay Function}:

\begin{equation}
    w_t = 
    \begin{cases} 
    1.0, & \text{if } 1 \le t \le K \quad \text{(Prime Visibility Zone)} \\
    \frac{1}{\log_2(t - K + 2)}, & \text{if } t > K \quad \text{(Tail Zone)}
    \end{cases}
\end{equation}

where $K$ denotes the effective viewport capacity. Given Kuaishou's typical \textbf{dual-column layout}~\cite{edge, kuailive, dualgr}, the first screen typically displays a $2 \times 2$ grid, which leads us to set $K=4$. This prioritizes high-impression items in the prime zone while enabling structured exploration in the tail.

\paragraph{The Unified Objective.}
Combining the Structural Prior (Distillation) and the Rank-Weighted Exploration (RL), the final objective for the Student is:
\begin{equation}
\label{eq:final_loss}
\begin{aligned}
    \mathcal{L}_{Total} = & {\mathcal{L}_{Teacher}} + \lambda_{KD} \cdot {\mathcal{L}_{KD}} \\
     & - \lambda_{RL} \cdot \frac{1}{G} \sum_{j=1}^{G} A_{total}^{(j)} \sum_{t=1}^{L} w_t \log \pi_{\theta_S}(y_{t}^{(j)}|x)
%
\end{aligned}
\end{equation}
This unified formulation allows Dual-Rerank to co-evolve via joint streaming updates.

\subsection{Inference: Sample-and-Rank Strategy}
\label{sec:inference}

As illustrated in Figure~\ref{fig:online_serving}, the online inference phase is designed to convert the structural efficiency of the NAR architecture into ranking capacity.
Unlike Autoregressive models that suffer from $O(L)$ serial decoding latency, Dual-Rerank executes a streamlined \textbf{Sample-and-Rank} pipeline:
\textbf{(1) One-Step Decoding:} The Encoder first processes the context $x$. Subsequently, the NAR Decoder predicts the item probability distribution $\pi_{\theta_S}(\cdot|x)$ for all ranking positions simultaneously via a single forward pass ($O(1)$).
\textbf{(2) Vectorized Sampling:} To explore the combinatorial solution space, we employ Vectorized Gumbel-Max sampling to generate $N$ diverse candidate slates in parallel.
\textbf{(3) Reward Scoring:} These candidates are immediately scored by the inference-time Reward System (consisting of the Reward Network $R_{net}$ and Priori Rules $R_{prior}$). Note that Posterior feedback is unavailable during inference, so we rely on the generalized utility learned by $R_{net}$.
\textbf{(4) Best-of-N Selection:} Finally, the sequence with the highest estimated score is selected as the Best Sequence for exposure.

This strategy effectively transforms the computational surplus of NAR (saved from avoiding serial decoding) into a significant gain in Search Breadth. It approximates global optimality within strict industrial latency constraints.


\section{Experiments}
\label{sec:experiments}

To comprehensively evaluate the effectiveness of Dual-Rerank, we conduct experiments to address the following research questions:
\begin{itemize}
    \item \textbf{RQ1 (Overall Effectiveness):} Does Dual-Rerank break the accuracy-latency trade-off, outperforming state-of-the-art NAR baselines while achieving parity with computationally expensive AR models?
    \item \textbf{RQ2 (Structural Fidelity):} Can the proposed \textit{Sequential Knowledge Distillation} (Phase 1) effectively transfer the dependency logic from AR to NAR, preventing the ``Ranking Collapse'' often observed in parallel generation?
    \item \textbf{RQ3 (Optimization Stability):} Does the \textit{LDRO} algorithm (Phase 2) effectively handle the non-stationary reward shifts in industrial streaming environments?
    \item \textbf{RQ4 (Industrial Viability):} How does the framework perform in a large-scale production environment in terms of system latency and core business metrics?
\end{itemize}

\subsection{Experimental Setup}

\subsubsection{Datasets}
We utilize a public benchmark and a massive industrial dataset to ensure reproducibility and practical applicability.

\begin{itemize}
    \item \textbf{Avito Context Ad Clicks}\footnote{https://www.kaggle.com/c/avito-context-ad-clicks/data}: A widely used benchmark containing over 53 million search sessions, 1.3 million users, and 36 million ads. Following standard protocols~\cite{nar4rec, gref}, we use data from the first 21 days for training and the subsequent 7 days for testing. The task involves predicting optimal permutations for lists of length 5 based on context features.

    \item \textbf{Kuaishou Production Dataset}: Collected from the Kuaishou App (400M+ DAUs), this dataset validates industrial robustness with \textbf{1 billion ($10^9$)} interaction logs. Each sample includes user profiles, 30 candidate items, and the top-10 exposed list. Labels incorporate multi-objective signals (Clicks, Long-Views et al.).
\end{itemize}

\subsubsection{Baselines}
We compare Dual-Rerank against a comprehensive set of state-of-the-art baselines, categorized into Discriminative (Pointwise/Listwise) and Generative (AR/NAR) paradigms.

\textbf{1. Discriminative Methods}
These models score items either independently or collectively.
\textbf{(1) DNN}~\cite{youtube}: Uses Multi-Layer Perceptrons (MLP) to estimate pointwise click-through rates (CTR) independently.
\textbf{(2) DCN}~\cite{dcn}: Enhances DNNs by introducing cross networks to explicitly learn bounded-degree feature interactions.
\textbf{(3) PRM}~\cite{prm}: A listwise approach employing self-attention to capture inter-item correlations and global context for scoring.

\textbf{2. Generative Methods (AR \& NAR)}
These methods treat reranking as sequence generation or generative evaluation.
\textbf{(1) Seq2Slate}~\cite{seq2slate}: An Autoregressive (AR) baseline using Pointer Networks for sequential selection, representing the accuracy upper bound for serial decoding.
\textbf{(2) Edge-Rerank}~\cite{edge}: A mobile-oriented framework generating sequences via adaptive beam search over pointwise scores.
\textbf{(3) PIER}~\cite{pier}: A two-stage framework combining hashing-based candidate selection with joint generator-evaluator training.
\textbf{(4) NAR4Rec}~\cite{nar4rec}: The SOTA Non-Autoregressive (NAR) model utilizing parallel decoding with unlikelihood training.
\textbf{(5) MG-E}~\cite{mge}: Deploys diverse generators with a ``list comprehensiveness'' metric to propose complementary lists for evaluation.

\subsection{Overall Effectiveness (RQ1)} 

We present the comparative results in Table~\ref{tab:overall_performance}. 

\begin{table}[t]
    \caption{Overall performance comparison on Avito and Kuaishou datasets. The best results are highlighted in \textbf{bold}, and the second-best are \underline{underlined}. Improvement is calculated relative to the best baseline (GReF).} 
    \label{tab:overall_performance}
    \centering
    \begin{tabular}{l | c c | c c}
        \toprule
        \multirow{2}{*}{\textbf{Method}} & \multicolumn{2}{c|}{\textbf{Avito (Public)}} & \multicolumn{2}{c}{\textbf{Kuaishou (Industrial)}} \\
        \cline{2-5}
         & \textbf{AUC} & \textbf{NDCG} & \textbf{AUC} & \textbf{NDCG} \\
        \midrule
        \multicolumn{5}{l}{\textit{Discriminative Methods}} \\
        DNN~\cite{youtube} & 0.6614 & 0.6920 & 0.6866 & 0.7122 \\
        DCN~\cite{dcn} & 0.6623 & 0.7004 & 0.6879 & 0.7116 \\
        PRM~\cite{prm} & 0.6881 & 0.7380 & 0.7119 & 0.7396 \\
        Edge-rerank~\cite{edge} & 0.6953 & 0.7203 & 0.7143 & 0.7401 \\
        PIER~\cite{pier} & 0.7109 & 0.7401 & 0.7191 & 0.7387 \\
        \midrule
        \multicolumn{5}{l}{\textit{Generative Methods (NAR)}} \\
        NAR4Rec~\cite{nar4rec} & 0.7234 & 0.7409 & 0.7254 & 0.7425 \\
        MG-E~\cite{mge} & 0.7323 & 0.7437 & 0.7381 & 0.7453 \\
        \midrule
        \multicolumn{5}{l}{\textit{Generative Methods (AR)}} \\
        Seq2Slate~\cite{seq2slate} & 0.7034 & 0.7225 & 0.7165 & 0.7383 \\
        GReF~\cite{gref} & \underline{0.7384} & \underline{0.7478} & \underline{0.7387} & \underline{0.7498} \\
        \midrule
        \textbf{Dual-Rerank} & \textbf{0.7441} & \textbf{0.7517} & \textbf{0.7448} & \textbf{0.7565} \\
        \emph{Improvement} & \emph{+0.77\%} & \emph{+0.52\%} & \emph{+0.83\%} & \emph{+0.89\%} \\
        \bottomrule
    \end{tabular}
\end{table}

\paragraph{Surpassing the NAR Bottleneck}
As shown in Table~\ref{tab:overall_performance}, Dual-Rerank significantly outperforms the SOTA NAR baseline, NAR4Rec, by a substantial margin (e.g., \textbf{+2.6\%} relative AUC gain on Kuaishou). This result offers strong empirical support for our distillation strategy: by explicitly injecting sequential dependency logic during Phase 1, we successfully overcome the ``Conditional Independence'' bottleneck that typically constrains parallel decoding models.

\paragraph{Beating the AR Teacher}
Crucially, Dual-Rerank not only matches but exceeds the performance of the strongest AR baseline, GReF (\textbf{0.7448} vs. 0.7387 on Kuaishou). While AR models are limited to imitating historical ground truth (via MLE), our Phase 2 optimization (LDRO) allows the student to transcend the teacher by directly optimizing for \textit{Whole-Page Utility} (Reward). This confirms that properly guided parallel exploration can yield better global optima than greedy serial decoding.

\subsection{Structural Fidelity (RQ2)}

To investigate whether the NAR student truly internalizes the sequential logic of the AR teacher—rather than merely memorizing pointwise scores—we introduce a structural consistency metric: \textbf{Ranking Flip Rate (RFR)}.

\paragraph{Metric Definition}
Standard metrics like MSE fail to capture the combinatorial nature of ranking. RFR measures the proportion of discordant pairs between the Teacher and Student. Let $\mathcal{S}^{T}$ and $\mathcal{S}^{S}$ denote the scoring functions of the Teacher and Student. A ``flip'' occurs for a pair $(i, j)$ if the Student reverses the Teacher's preference (i.e., $\mathcal{S}^{S}_i < \mathcal{S}^{S}_j$ given $\mathcal{S}^{T}_i > \mathcal{S}^{T}_j$). RFR is defined as:
\begin{equation}
\text{RFR} = \frac{\sum_{(i, j) \in \mathcal{P}} \mathbb{I}(\mathcal{S}^{S}_i < \mathcal{S}^{S}_j \mid \mathcal{S}^{T}_i > \mathcal{S}^{T}_j)}{|\mathcal{P}|}
\end{equation}
Lower RFR indicates higher fidelity to the teacher's sequential reasoning.

\begin{figure}[!th]
    \centering
    \includegraphics[width=0.95\linewidth]{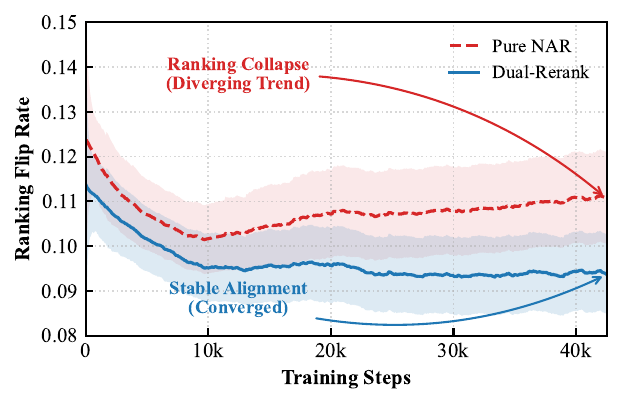}
    \caption{\textbf{Structural Fidelity Analysis (Ranking Flip Rate).} unlike PTAR which measures pointwise memorization, RFR evaluates the \textbf{pairwise ranking logic}. The divergence of the ``Pure NAR'' baseline reveals a \textit{Ranking Collapse} phenomenon (loss of global order), while Dual-Rerank maintains structural consistency.}
    \label{fig:flip_rate}
\end{figure}

\paragraph{Analysis of Training Dynamics}
Figure~\ref{fig:flip_rate} visualizes the evolution of RFR. A critical phenomenon, which we term \textbf{Ranking Collapse}, is observed in the ``Pure NAR'' baseline (Red Curve). Despite converging pointwise loss, its structural alignment with the teacher deteriorates, with RFR escalating significantly. This suggests that without explicit sequential guidance, NAR models struggle to resolve inter-item dependencies in the tail. In contrast, Dual-Rerank (Blue Curve) maintains a low and stable RFR, validating that \textit{Sequential Knowledge Distillation} effectively acts as a structural bridge, compressing the teacher's autoregressive patterns into the student's parallel parameters.

\paragraph{Impact of Contextual Depth}
We further analyze the necessity of full-sequence modeling in Table~\ref{tab:mechanism_comparison_ks}. The ``AR (Hybrid)'' strategy, which only models the top-2 items autoregressively, fails to match the performance of the full AR teacher. This indicates that dependency chains in short-video feeds cascade beyond the head. Our Distilled NAR model, however, achieves parity with the AR Teacher (\textbf{0.7391} vs. 0.7387), proving that the student has successfully captured these global dependencies.

\begin{table}[t]
    \caption{Impact of Sequential Modeling Strategies (Phase 1). * indicates parity with the AR Teacher.}
    \label{tab:mechanism_comparison_ks}
    \centering
    \resizebox{\linewidth}{!}{
    \begin{tabular}{l l l c}
        \toprule
        \textbf{Model} & \textbf{Training} & \textbf{Inference} & \textbf{AUC} \\
        \midrule
        Pure NAR & MLE & Parallel & 0.7254 \\
        \midrule
        \multicolumn{4}{l}{\textit{Autoregressive Teachers}} \\
        AR (Hybrid) & AR & Top-2 AR + Greedy & 0.7325 \\
        AR (Full) & AR & Full AR Decoding & \underline{0.7387} \\
        \midrule
        \textbf{Distilled NAR} & \textbf{Sequential Dist.} & \textbf{Parallel (NAR)} & \textbf{0.7391}$^*$ \\
        \emph{vs. Pure NAR} & & & \emph{(+1.89\%)} \\
        \bottomrule
    \end{tabular}
    }
\end{table}

\subsection{Ablation \& Training Stability (RQ3)}

To isolate the contributions of our proposed components, we conduct an ablation study on the Kuaishou dataset (Table~\ref{tab:ablation}).

\begin{table}[h]
    \caption{Ablation study of Dual-Rerank components.}
    \label{tab:ablation}
    \centering
    \begin{tabular}{l c c}
        \toprule
        \textbf{Variant} & \textbf{AUC} & \textbf{Drop} \\
        \midrule
        \textbf{Dual-Rerank (Full)} & \textbf{0.7448} & - \\
        w/ GRPO (Replace LDRO) & 0.7395 & -0.71\% \\
        w/o Distillation (Phase 1) & 0.7254 & -2.60\% \\
        w/o Double-Decoupling (Phase 2) & 0.7402 & -0.62\% \\
        w/o Rank-Weighting & 0.7429 & -0.26\% \\
        \bottomrule
    \end{tabular}
\end{table}

\paragraph{Key Observations}
\begin{enumerate}
    \item \textbf{Distillation is Foundational:} Removing Phase 1 causes the largest drop (\textbf{-2.60\%}). This confirms that the ``Sequential Anchor'' provided by the teacher is a prerequisite; without it, the model starts from a chaotic state and fails to converge.
    
    \item \textbf{Superiority of LDRO:} Replacing LDRO with standard GRPO leads to a \textbf{-0.71\%} drop. This performance regression is comparable to disabling 
    
    \item \textbf{Double-Decoupling} (-0.62\%), validating that generic RL (like GRPO) struggles with industrial ``Reward Shift,'' which our decoupling strategy effectively neutralizes.
    
    \item \textbf{Rank-Weighting Aligns Utility:} The removal of Rank-Weighting results in a \textbf{-0.26\%} drop. While smaller, this helps ensure the model prioritizes correctness in the prime visibility zone (Top-4).
\end{enumerate}

\subsection{Efficiency \& Online Impact (RQ4)}

\paragraph{System Efficiency Analysis}
Dual-Rerank is designed to maximize the \textit{Efficiency-Accuracy Pareto Frontier}. As shown in Table~\ref{tab:efficiency_comparison}, by shifting from serial $O(L)$ to parallel $O(1)$ decoding, we reduce average latency by \textbf{43.7\%} (21.5ms $\rightarrow$ 12.1ms). 
This computational surplus is strategic: it allows us to deploy the more expensive ``Best-of-N'' sampling strategy within the strict online time budget, effectively converting saved latency into expanded search breadth.

\begin{table}[h]
    \caption{System Inference Efficiency Comparison.}
    \label{tab:efficiency_comparison}
    \centering
    \begin{tabular}{l c c c}
        \toprule
        \textbf{System} & \textbf{Decoding} & \textbf{Avg. Latency} & \textbf{Speedup} \\
        \midrule
        Production AR & Serial $O(L)$ & 21.5 ms & 1.0$\times$ \\
        \textbf{Dual-Rerank} & \textbf{Parallel $O(1)$} & \textbf{12.1 ms} & \textbf{1.8$\times$} \\
        \bottomrule
    \end{tabular}
\end{table}

\paragraph{Online A/B Testing}
We deployed Dual-Rerank on the Kuaishou App, conducting a rigorous A/B test on \textbf{5\% of production traffic over a period of one month}
. 
The results (Table~\ref{tab:online_results}) show statistically significant gains.
The \textbf{+0.373\%} increase in Long-View Rate indicates that the model is optimizing for deep user engagement rather than click-bait. Simultaneously, the \textbf{-0.709\%} drop in Query Reformulation Rate suggests a substantial improvement in search relevance and user satisfaction. These metrics confirm that Dual-Rerank successfully translates theoretical advancements into tangible business value.

\begin{table}[h]
    \caption{Online A/B testing results. (* indicates $p < 0.05$).}
    \label{tab:online_results}
    \centering
    \begin{tabular}{l c}
        \toprule
        \textbf{Metric} & \textbf{Relative Improvement} \\
        \midrule
        \multicolumn{2}{l}{\textit{User Satisfaction Metrics}} \\
        Long-View Rate & \textbf{+1.107\%}$^*$ \\
        Whole-Page CTR & \textbf{+0.714\%}$^*$ \\
        Query Reformulation Rate $\downarrow$ & \textbf{-1.309\%}$^*$ \\
        \bottomrule
    \end{tabular}
\end{table}

\section{Conclusion}
We proposed Dual-Rerank to reconcile the tension between accuracy and efficiency in industrial reranking. Validating the \textbf{Unimodal Concentration Hypothesis}, we compressed AR dependencies into a parallel NAR student via \textbf{Sequential Knowledge Distillation}. Furthermore, our LDRO algorithm stabilizes streaming RL to optimize whole-page utility beyond imitation. Experiments on billion-scale datasets confirm that Dual-Rerank outperforms both SOTA NAR baselines and AR teachers. Online deployment demonstrates significant gains in user satisfaction with a 40\% latency reduction. Future work will explore scaling this paradigm to foundation models to exploit emergent list-wise reasoning abilities.

\bibliographystyle{ACM-Reference-Format}
\bibliography{sample-base}

\appendix

\appendix

\section{Theoretical Analysis}
\label{sec:appendix_theory}

In this section, we provide the rigorous mathematical proofs for the feasibility and stability of the Dual-Rerank framework proposed in the main text.

\subsection{Proof of Feasibility: Sequential Linearization}
\label{sec:proof_feasibility}

The core premise of Dual-Rerank is the \textit{Sequential Linearization Hypothesis}, which posits that explicit autoregressive (AR) dependency modeling is redundant given a sufficiently expressive encoder and a deterministic ranking task. We formalize this using Information Theory.

\subsubsection{Problem Formulation}
Let $X$ denote the input context (comprising user profile, interaction history, and candidate item features), and $Y = \{y_1, \dots, y_L\}$ denote the optimal permutation sequence.
\begin{itemize}
    \item An AR model estimates the joint probability as $P_{AR}(Y|X) = \prod_{t=1}^L P(y_t | y_{<t}, X)$.
    \item A NAR model estimates it as $P_{NAR}(Y|X) \approx \prod_{t=1}^L P(y_t | X)$.
\end{itemize}
The information loss incurred by the NAR approximation is quantified by the \textbf{Conditional Mutual Information (CMI)} between the current token $y_t$ and the history $y_{<t}$ given context $X$:
\begin{equation}
    I(y_t; y_{<t} | X) = H(y_t | X) - H(y_t | y_{<t}, X)
\end{equation}
where $H(\cdot)$ denotes the entropy.

\subsubsection{Theoretical Justification via DPI}
\textbf{Hypothesis (Unimodal Concentration).} In industrial reranking scenarios, the optimal sorting order is quasi-deterministic given the full context (i.e., utility is strictly hierarchical). Consequently, the conditional entropy of the optimal ranking approaches zero: $H(Y|X) \to 0$.

\textbf{Theorem A.1 (Vanishing Sequential Dependency).}
\textit{If the encoder $f_\theta(X)$ serves as a sufficient statistic for the optimal ranking $Y$, and the target distribution is unimodal, then the conditional mutual information $I(y_t; y_{<t} | X)$ approaches zero.}

\begin{proof}
By the chain rule of mutual information, the total dependency is the sum of step-wise dependencies:
\begin{equation}
    I(Y; X) = \sum_{t=1}^L I(y_t; X | y_{<t})
\end{equation}
Consider the Data Processing Inequality (DPI) applied to the reranking Markov chain: $Y_{truth} \to X \to \text{Encoder}(X) \to \hat{Y}$. A powerful Deep Bidirectional Encoder (e.g., BERT-based) is trained to maximize $I(Y_{truth}; \text{Encoder}(X))$.

Under the \textit{Unimodal Concentration} assumption, knowing $X$ fully determines the label $Y$. Thus, the conditional entropy $H(y_t | X) \approx 0$.
Since entropy is non-negative and conditioning cannot increase entropy:
\begin{equation}
    0 \le H(y_t | y_{<t}, X) \le H(y_t | X)
\end{equation}
Given $H(y_t | X) \approx 0$, it follows by the Sandwich Theorem that $H(y_t | y_{<t}, X) \approx 0$. Substituting this back into Eq. (16):
\begin{equation}
    I(y_t; y_{<t} | X) \approx 0 - 0 = 0
\end{equation}
\textbf{Conclusion:} $I(y_t; y_{<t} | X) \approx 0$ implies that $y_t$ is conditionally independent of $y_{<t}$ given $X$. Therefore, the complex autoregressive probability $P(y_t | y_{<t}, X)$ mathematically degenerates to the marginal $P(y_t | X)$. This theoretically justifies the use of the NAR architecture for reranking.
\end{proof}

\subsection{Proof of Stability: $\epsilon$-Bound Theorem}
\label{sec:proof_stability}

Here we provide the proof for the Stability Theorem stated in Section 3.2, establishing that knowledge distillation provides a safety guarantee for the NAR student against ranking errors.

\textbf{Definition A.2 (Probabilistic Ranking Margin).}
We define the ranking confidence of the Teacher model as the probability gap between a relevant item $y_i$ and a less relevant item $y_j$. A robust Teacher satisfies a margin condition $\delta > 0$:
\begin{equation}
    P_T(y_i|X) - P_T(y_j|X) \ge \delta, \quad \forall (i, j) \in \mathcal{P}_{ordered}
\end{equation}
where $\mathcal{P}_{ordered}$ is the set of pairs where item $i$ is strictly better than item $j$.

\textbf{Theorem A.3 ($\epsilon$-Bound Ranking Stability).}
\textit{Let $P_T$ be a teacher distribution with margin $\delta$, and $P_S$ be a student distribution trained such that the KL-divergence $D_{KL}(P_T || P_S) \le \epsilon$. A sufficient condition to prevent a Ranking Flip (i.e., erroneously predicting $P_S(y_j) > P_S(y_i)$) is:}
\begin{equation}
    \delta > \sqrt{2\epsilon}
\end{equation}
\textit{Conversely, the probability of a flip occurring is strictly bounded by the ratio of the error to the margin.}

\begin{proof}
\textbf{Step 1: Relating KL to Total Variation.}
By Pinsker’s Inequality, the KL divergence bounds the Total Variation Distance (TVD) between the teacher ($P_T$) and student ($P_S$) distributions:
\begin{equation}
    \delta_{TV}(P_T, P_S) = \sup_{A} |P_T(A) - P_S(A)| \le \sqrt{\frac{1}{2} D_{KL}(P_T || P_S)} \le \sqrt{\frac{\epsilon}{2}}
\end{equation}
This inequality implies that for any single item outcome $y$, the pointwise probability deviation is bounded by:
\begin{equation}
    |P_T(y) - P_S(y)| \le \sqrt{\frac{\epsilon}{2}}
\end{equation}

\textbf{Step 2: Worst-Case Analysis.}
A \textbf{Ranking Flip} occurs if the student predicts $P_S(y_j) > P_S(y_i)$ despite the teacher's ground truth $P_T(y_i) \ge P_T(y_j) + \delta$.
In the worst-case scenario, the student underestimates the probability of the superior item $y_i$ and overestimates the inferior item $y_j$ by the maximum possible error margin.
Let $e = \sqrt{\epsilon/2}$. The condition for a flip is:
\begin{equation}
    P_S(y_i) < P_S(y_j)
\end{equation}
Substituting the error bounds relative to the teacher:
\begin{equation}
    (P_T(y_i) - e) < (P_T(y_j) + e)
\end{equation}
Rearranging terms:
\begin{equation}
    P_T(y_i) - P_T(y_j) < 2e = \sqrt{2\epsilon}
\end{equation}
However, by Definition A.2, we know the teacher imposes a margin $P_T(y_i) - P_T(y_j) \ge \delta$. A contradiction (and thus a flip) is physically impossible if the margin exceeds the error bound:
\begin{equation}
    \delta \ge \sqrt{2\epsilon} \implies \text{No Flip}
\end{equation}
\textbf{Conclusion:} To strictly guarantee stability, we require $\epsilon < \delta^2 / 2$. This proves that minimizing the distillation loss $\epsilon$ directly expands the stability region, making the Student asymptotically consistent with the Teacher's ordering logic.
\end{proof}

\section{Implementation Details}
\label{sec:appendix_implementation}

To ensure reproducibility, we detail the model architecture, training configuration, and hyperparameters used in our experiments. All models are implemented in TensorFlow and trained on a high-performance computing cluster.

\subsection{Model Architecture}
We adopt a shared encoder architecture for both the Teacher and the Student to extract representations from user behavior sequences and candidate items.

\begin{itemize}
    \item \textbf{Shared Encoder:} The encoder is a Transformer-based architecture consisting of $N_{enc}=4$ layers. It employs Multi-Head Self-Attention with $H=3$ heads, a hidden dimension of $d_{model}=256$, and an inner feed-forward dimension of $d_{ffn}=256$. The input sequence length is set to $L=30$, corresponding to the re-ranking candidate list size.
    
    \item \textbf{Teacher Model:} The Teacher employs a Transformer Decoder architecture initialized with $N_{teacher}=4$ layers. It utilizes causal masking to model the generative probability of the optimal sequence. During the distillation phase, the teacher provides soft labels to guide the student.
    
    \item \textbf{Student Model (Generator):} The Student is a lightweight Transformer Decoder consisting of only $N_{student}=1$ layer. It incorporates both causal self-attention and cross-attention mechanisms to interact with the encoder outputs efficiently. This shallow architecture significantly reduces inference latency while maintaining representation capability through distillation.
\end{itemize}

\subsection{Training Configuration}
The training process integrates Supervised Fine-Tuning (SFT) loss, Distillation loss, and the proposed LDRO reinforcement learning objective.

\begin{itemize}
    \item \textbf{Optimization:} We use an Adagrad-based optimizer for sparse embeddings with an initial accumulator value of 3.0. For dense parameters, we utilize a specialized dense optimizer with a learning rate of $9.0 \times 10^{-6}$ and decay rates $\beta_1=0.99, \beta_2=0.9999$.
    
    \item \textbf{Loss Weights:} The total loss is a weighted sum of the exposure loss ($\mathcal{L}_{exp}$), the intra-position BRP loss, the distillation loss ($\mathcal{L}_{distill}$), and the RL loss ($\mathcal{L}_{GRPO}$).
    \begin{equation}
        \mathcal{L}_{total} = \mathcal{L}_{exp} + 0.5 \cdot \mathcal{L}_{GRPO} + \mathcal{L}_{intra\_brp} + 1.0 \cdot \mathcal{L}_{distill}
    \end{equation}
    The distillation temperature is set to $\tau=1.0$.
    
    \item \textbf{LDRO Hyperparameters:}
        \item \textbf{Mini Batch Size ($\mathcal{B}$):} We set the mini-batch size to $\mathcal{B}=1024$ queries per training step.
        \item \textbf{Group Size ($G$):} We sample $G=12$ sequences for each input query to estimate the baseline.
        \item \textbf{KL Coefficients:} The KL divergence penalty coefficient is set to $\beta_{KL}=0.02$.
        \item \textbf{Entropy Bonus:} To encourage exploration, we use an entropy coefficient of $\beta_{entropy}=0.05$.
        \item \textbf{Objective Standardization:} We apply decoupled normalization to the rewards (CTR, LvTR, etc.) before weighted summation. The weights for the objectives are set uniformly to $1.0$.
    \end{itemize}

\subsection{Inference Strategy}
To ensure diversity and enable efficient exploration during inference, we employ a \textbf{Gumbel-Max Sampling} strategy rather than deterministic greedy decoding:

\begin{itemize}
    \item \textbf{Gumbel-Max Sampling:} We simulate sampling from the categorical distribution by adding Gumbel noise to the unnormalized log-probabilities (logits). For a given time step $t$, the next item $y_t$ is selected via:
    \begin{equation}
        y_t = \arg\max_{i \in \mathcal{V}} \left( \frac{z_{t,i}}{\tau} + g_{t,i} \right)
    \end{equation}
    where $z_{t,i}$ is the logit for candidate $i$, $\tau$ is the temperature parameter (set to 0.8), and $g_{t,i} \sim \text{Gumbel}(0, 1)$ is independent noise drawn from the standard Gumbel distribution. This transforms the sampling process into a deterministic optimization problem given the noise.

    \item \textbf{Vectorized Optimization:} To minimize inference latency, we implement a fully vectorized version of the Gumbel-Max trick. Specifically, we pre-compute the Gumbel noise tensors for the entire sequence length and batch size \textit{outside} the auto-regressive decoding loop. This design avoids expensive random number generation calls within the loop and allows us to use the highly optimized \texttt{argmax} operator instead of the slower \texttt{multinomial} sampling operation, significantly accelerating the step-wise generation.
\end{itemize}

\begin{table}[h]
\centering
\caption{Hyperparameters used in our experiments.}
\label{tab:hyperparams}
\begin{tabular}{l|c}
\toprule
\textbf{Hyperparameter} & \textbf{Value} \\
\midrule
Sequence Length ($L$) & 30 \\
Embedding Dimension ($d_{model}$) & 256 \\
Feed-Forward Dimension ($d_{inner}$) & 256 \\
Attention Heads ($H$) & 3 \\
Head Dimension ($d_k, d_v$) & 64 \\
Dropout Rate & 0.1 \\
\midrule
Teacher Layers & 4 \\
Student Layers & 1 \\
\midrule
GRPO Group Size ($G$) & 12 \\
KL Coefficient ($\beta_{KL}$) & 0.02 \\
Entropy Coefficient ($\beta_{entropy}$) & 0.05 \\
Distillation Temperature ($\tau$) & 1.0 \\
\bottomrule
\end{tabular}
\end{table}

\section{Metrics Definition for Figure 1}
\label{app:metrics}

To empirically validate the \textit{Unimodal Concentration Hypothesis}, we introduced three quantitative metrics in Figure 1 to characterize the distribution differences between open-ended NLP tasks and industrial reranking tasks. The detailed definitions are as follows:

\subsection{Top-1 Confidence}
This metric measures the certainty of the model's prediction at each decoding step. It is defined as the probability mass assigned to the highest-ranked token (item) in the vocabulary (candidate set) $\mathcal{V}$:
\begin{equation}
    \text{Conf}_{\text{top1}} = \frac{1}{L} \sum_{t=1}^{L} \max_{v \in \mathcal{V}} P(v | y_{<t}, x)
\end{equation}
A high Top-1 Confidence indicates that the probability distribution is sharply peaked, suggesting a deterministic decision process, whereas a low value implies high uncertainty or high entropy.

\subsection{Branching Factor}
The Branching Factor quantifies the effective number of viable next-token choices at each step. It is derived from the exponentiation of the entropy (Perplexity-like metric):
\begin{equation}
    \mathcal{B} = \frac{1}{L} \sum_{t=1}^{L} 2^{H(P_t)} = \frac{1}{L} \sum_{t=1}^{L} 2^{-\sum_{v \in \mathcal{V}} P(v)\log_2 P(v)}
\end{equation}
where $H(P_t)$ is the Shannon entropy of the prediction distribution at step $t$.
\begin{itemize}
    \item \textbf{High Branching Factor ($\approx 6.0$ for NLP):} Indicates a ``one-to-many'' mapping where multiple words are equally valid (e.g., creative writing).
    \item \textbf{Low Branching Factor ($\approx 2.0$ for Reranking):} Indicates a ``quasi-deterministic'' mapping where the optimal item is highly constrained by the context.
\end{itemize}

\subsection{Sequence Consistency}
This metric evaluates the stability of the generated sequence against the ground truth. Since exact matching is too strict for long sequences, we employ the \textbf{Jaccard Similarity} coefficient to measure the overlap between the set of tokens in the greedily generated sequence $\hat{Y}_{\text{greedy}}$ and the ground truth sequence $Y_{\text{truth}}$:
\begin{equation}
    \text{Seq. Consistency} = \frac{|\hat{Y}_{\text{greedy}} \cap Y_{\text{truth}}|}{|\hat{Y}_{\text{greedy}} \cup Y_{\text{truth}}|}
\end{equation}
As shown in Figure 1(b), the Reranking task exhibits high consistency ($0.75$), confirming that given a specific user context, the optimal ranking order is largely unique and reproducible. In contrast, NLP tasks show low consistency ($0.23$) due to the intrinsic linguistic diversity (i.e., multiple valid ways to express the same meaning).

\end{document}